\newtheorem{theorem}{Theorem}
\newtheorem{lemma}{Lemma}
\newtheorem{corollary}{Corollary}
\begin{document}

\title{On the Fixed-Parameter Tractability of Some Matching Problems Under the Color-Spanning Model}

\author{Sergey Bereg\footnote
{Department of Computer Science, University of Texas-Dallas, Richardson, TX 75080, USA. Email: \texttt{besp@utdallas.edu}.}
\and
Feifei Ma\footnote{State Key Laboratory of Computer Science, Institute of Software, Chinese Academy of Sciences, Beijing, 100080, China. Email: \texttt{maff@ios.ac.cn}. } 
\and
Wencheng Wang\footnote{State Key Laboratory of Computer Science, Institute of Software, Chinese Academy of Sciences, Beijing, 100080, China. Email: \texttt{whn@ios.ac.cn}. } 
\and
Jian Zhang\footnote{State Key Laboratory of Computer Science, Institute of Software, Chinese Academy of Sciences, Beijing, 100080, China. Email: \texttt{zj@ios.ac.cn}. } 
\and
Binhai Zhu \footnote{Gianforte School of Computing, Montana State University, Bozeman, MT 59717, USA. Email: \texttt{bhz@montana.edu}.}
}

\date{}
\maketitle

\begin{abstract}
Given a set of $n$ points $P$ in the plane, each colored with one of the $t$
given colors, a color-spanning set $S\subset P$ is a subset of $t$ points with
distinct colors. The minimum diameter color-spanning set (MDCS) is a
color-spanning set whose diameter is minimum (among all color-spanning
sets of $P$). Somehow symmetrically, the largest closest pair color-spanning
set (LCPCS) is a color-spanning set whose closest pair is the largest (among all
color-spanning sets of $P$). Both MDCS and LCPCS have been shown to be
NP-complete, but whether they are fixed-parameter tractable (FPT) when $t$ is
a parameter is still open. Motivated by this question, we consider the FPT
tractability of some matching problems under this color-spanning model, where
$t=2k$ is the parameter. The problems are summarized as follows: (1) MinSum Matching
Color-Spanning Set, namely, computing a matching of $2k$ points with distinct
colors such that their total edge length is minimized;
(2) MaxMin Matching Color-Spanning Set, namely, computing a matching of $2k$
points with distinct
colors such that the minimum edge length is maximized;
(3) MinMax Matching Color-Spanning Set, namely, computing a matching of $2k$
points with distinct
colors such that the maximum edge length is minimized; and
(4) $k$-Multicolored Independent Matching, namely, computing a matching of $2k$
vertices in a graph such that the vertices of the edges in the matching do not
share common edges in the graph.
We show that the first three problems are polynomially solvable (hence in FPT), while problem (4) is W[1]-hard.
\end{abstract}

%\begin{keyword}
%Matching \sep Algorithms \sep FPT algorithms \sep Color-spanning model \sep NP-completeness
%\end{keyword}

\section{Introduction}

Given a set of $n$ points $Q$ with all points colored in one of the $t$
given colors, a {\em color-spanning} set (sometimes also called a {\em rainbow}
set) is a subset of $t$ points with distinct colors. (In this paper, as we focus
on matching problems, we set $t=2k$. Of course, in general $t$ does not always
have to be even.) In practice, many
problems require us to find a specific color-spanning set with certain property
due to the large size of the color-spanning sets. For instance, in data mining
a problem arises where one wants to find a color-spanning set whose diameter
is minimized (over all color-spanning sets), which can be solved in $O(n^t)$
time using a brute-force method \cite{Zhang,Chen}. (Unfortunately, this is still the best bound
to this date.)

%The minimum diameter color-spanning set (MDCS) problem is firstly
%studied by Zhang \emph{et al.} in spatial databases \cite{Zhang}.
%But Zhang \emph{et al.} only proposed an $O(n^m)$ time algorithm for
%the problem. It is unfortunately a brute force algorithm. Then Chen
%\emph{et al.} implemented the algorithm in a geographical tagging
%system \cite{Chen}. Finally, Fleischer \emph{et al.}
Since the color-spanning set problems were initiated in 2001 \cite{esa01},
quite some related problems have been investigated. Many of the traditional
problems which are polynomially solvable, like Minimum Spanning Tree, Diameter,
Closest Pair, Convex Hull, etc, become NP-hard under the color-spanning model
\cite{rudolf10,rudolf11,ju13}. 
Note that for the hardness results the objective functions are usually slightly changed.
For instance, in the color-spanning model, we would like to maximize the
closest pair and minimize the diameter (among all color-spanning sets).
On the other hand, some problems, like the Maximum Diameter Color-Spanning Set,
remain to be polynomially solvable \cite{fan14}.

In \cite{rudolf10,rudolf11}, an interesting question was raised. Namely, if $t$ is a parameter, is
the NP-complete Minimum Diameter Color-Spanning Set (MDCS) problem
fixed-parameter tractable? This question is still open. In this paper, we
try to investigate some related questions along this line. The base problem
we target at is the matching problem, both under the geometric model and the
graph model. We show that an important graph version is W[1]-hard while all
other versions in consideration are polynomially solvable, hence are
fixed-parameter tractable (FPT).

This paper is organized as follows. In Section 2, we define the basics
regarding FPT algorithms and the problems we will investigate. In Section 3,
we illustrate the positive FPT results on the geometric version MinSum Matching
(and a related graph version). In Section 4, we show the positive results
on the MaxMin Matching and MinMax Matching under the color-spanning model.
In Section 5, we show that a special graph version is W[1]-hard. In Section 5,
we conclude the paper.

\section{Preliminaries}
We make the following definitions regarding this paper.
An Fixed-Parameter Tractable (FPT) algorithm is an algorithm for
a decision problem with input size $n$ and parameter $k$ whose running time
is $O(f(t)n^c)=O^*(f(t))$, where $f(-)$ is any computable function on $t$
and $c$ is a constant. FPT algorithms are efficient tools for handling
some NP-complete problems as they introduce an extra dimension $t$. If
an NP-complete problem, like Vertex Cover, admits an FPT algorithm, then
it is basically polynomially solvable when the parameter $t$ is a small constant
\cite{DF99,jfmg}.

Of course, it is well conceived that not all NP-hard problems admit FPT
algorithms. It has been established that
$$\mbox{FPT}\subseteq W[1]\subseteq W[2]\subseteq \cdots W[z]\subseteq \mbox{XP},$$
where $\mbox{XP}$ represents the set of problem which must take $O(n^t)$ time
to solve (i.e., not FPT), with $t$ being the parameter. Typical problems in W[1] include Independent Set and Clique, etc.
For the formal definition and foundation, readers are referred to
\cite{DF99,jfmg}.

Given a set $Q$ of $n$ points in the plane with $t$ colors, a {\em color-spanning} set
$S\subset Q$ is a subset of $t$ points with distinct colors. If $S$
satisfies a property $\Pi$ among all color-spanning sets of $Q$, we call
the corresponding problem of computing $S$ the Property-$\Pi$ Color-Spanning
Set. For instance, the Minimum Diameter Color-Spanning Set (MDCS) is one where
the diameter of $S$ is minimized (among all color-spanning sets of $Q$) and
the Largest Close Pair Color-Spanning Set (LCPCS) is one where the closest pair
of $S$ is maximized (among all color-spanning sets of $Q$).
All the distances between two points in the plane are Euclidean (or $L_2$). We
next define the matching problems we will investigate in this paper.

Given a set $P$ of $n$ points in the plane with $2k$ colors, a {\em color-spanning} set
$S\subset P$ is a subset of $2k$ points with distinct colors. The points
in $S$ always form a perfect matching, i.e., a set $M$ of $k$ edges connecting
the $2k$ points in $S$. Among all these matchings (over all color-spanning sets),
if a matching $M$ satisfies a property $\Pi$, we call the problem the {\em Property-$\Pi$
Matching Color-Spanning Set} or {\em Property-$\Pi$ Color-Spanning Matching}.
The three properties we focus on are MinSum, MinMax and MaxMin.

MinSum means that the sum of edge lengths in $M$ is minimized,
MinMax means that the maximum edge length in $M$ is minimized, and
MaxMin means that the minimum edge length in $M$ is maximized. The main purpose
of this paper is to investigate the FPT tractability of the three problems:
MinSum Matching Color-Spanning Set, MinMax Matching Color-Spanning Set, and
MaxMin Matching Color-Spanning Set. We show that all these problems are
in fact polynomially solvable (hence FPT).

We also briefly mention some of the related problems on graphs, where we are
given a general weighted graph $G$ whose vertices are colored with $2k$ colors, the
problem is to determine whether a perfect matching $M$ exists such that $M$
contains exactly $2k$ vertices of distinct colors (and if so, compute such a
matching with the minimum weight). We call this problem
$k$-Multicolored Matching, and we will show that this problem is in P (hence FPT).

Finally, we will study a special version on graphs where the (vertices of the)
edges in $M$ cannot share edges in $G$. We call the problem $k$-Multicolored
Independent Matching, and we will show that this problem is W[1]-hard.

\section{MinSum Matching Color-Spanning Set is in P}

In this section, we consider the {\em MinSum Matching Color-Spanning Set} (MSMCS)
problem, namely, given a set $P$ of $n$ points in the plane, each colored with
one of the $2k$ colors, identify $2k$ points with distinct colors such that
they induce a matching with the minimum total weight (among all feasible
color-spanning matchings). Recall that the weight of an edge $(p_i,p_j)$ is the
Euclidean distance between $p_i$ and $p_j$. For a point $p_i$, let $color(p_i)$
be the color of $p_i$. For this problem, we have a useful property of
the optimal solution which is stated as follows.

\begin{lemma}
In an optimal solution of MSMCS, let $p_i$ and $p_j$ be a matched edge in the
optimal matching, then $(p_i,p_j)$ must be the closest pair between points
of $color(p_i)$ and $color(p_j)$.
\end{lemma}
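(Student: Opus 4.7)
The plan is to prove this by a straightforward local-exchange (swap) argument. Suppose, for contradiction, that in an optimal solution consisting of a color-spanning set $S$ of $2k$ points together with a perfect matching $M$ on $S$, there is a matched edge $(p_i, p_j) \in M$ with $color(p_i)=c_a$ and $color(p_j)=c_b$ which is \emph{not} the closest pair between the color classes $c_a$ and $c_b$. Then there exist points $p'_i$ of color $c_a$ and $p'_j$ of color $c_b$ (in $P$) such that the Euclidean distance $|p'_i p'_j|$ is strictly less than $|p_i p_j|$.

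Next I would construct a feasible color-spanning matching of strictly smaller total weight from $M$. Form $S' = (S \setminus \{p_i, p_j\}) \cup \{p'_i, p'_j\}$ and $M' = (M \setminus \{(p_i, p_j)\}) \cup \{(p'_i, p'_j)\}$. The key point to verify is that $S'$ is still a color-spanning set: since $p_i$ was the unique point in $S$ of color $c_a$ and $p_j$ the unique point of color $c_b$, and since $p'_i$ (resp.\ $p'_j$) has color $c_a$ (resp.\ $c_b$), we have $p'_i, p'_j \notin S$ and $S'$ contains exactly one point of each of the $2k$ colors. Moreover, $M'$ is a perfect matching on $S'$, because only the edge incident to $p_i$ and $p_j$ is modified and its endpoints are replaced consistently.

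Finally, comparing total weights, the weight of $M'$ equals the weight of $M$ minus $|p_i p_j|$ plus $|p'_i p'_j|$, which is strictly smaller than the weight of $M$. This contradicts the optimality of $M$, so the assumed non-closest matched pair cannot exist.

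The argument is essentially mechanical; I do not anticipate a genuine obstacle. The only subtlety worth stating carefully is the disjointness $\{p'_i, p'_j\} \cap S = \emptyset$, which relies on $S$ having exactly one representative per color and on $p'_i \neq p_i$ or $p'_j \neq p_j$ (guaranteed by the strict inequality $|p'_i p'_j| < |p_i p_j|$). Everything else is a one-line bookkeeping check.
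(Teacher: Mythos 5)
Your exchange argument is correct and is exactly the argument the paper intends: the paper states Lemma~1 without proof, but its proof of the symmetric Lemma~2 (for MaxMin) is the same one-line swap, and your write-up just carries it out in more detail. One small wording slip: the claim ``$p'_i, p'_j \notin S$'' is not quite right, since one of the two closest-pair points may coincide with the point it replaces; what matters (and what you correctly note at the end) is only that $S'$ retains exactly one point per color, so the argument goes through unchanged.
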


Using this property, we show that the MinSum Matching Color-Spanning Set can be solved in polynomial time (hence FPT).
First, for each ${2k\choose 2}$ pairs of colors, 
compute the bichromatic closest pair of points of the selected colors.
This can be done in $O(n\log n)$ time \cite{PS85} for each pair of colors. 
The total time for all pairs of colors is $O(k^2n\log n)$.
It can be reduced to $O(kn\log n)$ as follows. 
Suppose that the colors are $1,2,\dots,2k$. 
For each $i=1,2,\dots,2k-1$, do the following steps.
\begin{enumerate}
\item[(1)] Make a graph $G=(V,E)$ with $V=\{1,2,\dots,2k\}$ and $E=\emptyset$.
\item[(2)] For points of color $i$, construct the Voronoi diagram and a data structure $D_i$ for point location with $O(\log n)$ query time.
\item[(3)] For each color $j\in\{i+1,i+2,\dots,2k\}$ and each point $p$ of color $j$, find its nearest neighbor $q$ in $D_i$.
For each color $j\in\{i+1,i+2,\dots,2k\}$, compute a pair $(p,q)$ with minimum Euclidean distance and add it to $E$.
\end{enumerate}

Finally, we compute a perfect matching in $G$ of minimum weight using a variation of Edmonds algorithm with running time $O(n^3)$ \cite{lawler76,gabow74}\footnote{Notice that the problems of finding the matchings of minimum weight and of maximum weight are equivalent.}.
We hence have

\begin{theorem}
A minsum matching color-spanning set can be computed in $O(k^{3}+kn\log n)$ time.
\end{theorem}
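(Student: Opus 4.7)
The plan is to leverage Lemma~1 to collapse the point set $P$ down to an abstract weighted graph $G$ on only $2k$ vertices, one per color, and then invoke a classical minimum weight perfect matching algorithm on $G$.

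First, I would make the reduction explicit. For each pair of distinct colors $(i,j)$, let $d_{ij}$ be the Euclidean distance realized by the bichromatic closest pair between the points of color $i$ and the points of color $j$. Build the complete graph $G=(V,E)$ with $V=\{1,2,\ldots,2k\}$ and weight $w(i,j)=d_{ij}$. By Lemma~1, every optimal MSMCS matching $M$ uses, on each of its $k$ matched pairs, the bichromatic closest pair for the corresponding two colors; conversely, any perfect matching in $G$ of weight $W$ lifts to a color-spanning matching in $P$ of total length $W$ by selecting the witness points of each chosen bichromatic closest pair. Since each color participates in exactly one matched pair, there is no conflict over which representative point of a given color gets used. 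Thus optimal MSMCS solutions and minimum weight perfect matchings of $G$ are in weight-preserving bijection.

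Next I would bound the running time. The bichromatic closest pairs are computed exactly as described in the preceding text: for $i=1,\ldots,2k-1$, build a Voronoi diagram on the points of color $i$ together with a planar point-location structure $D_i$ of $O(\log n)$ query time, then for every point of every later color $j>i$ query $D_i$ for its nearest neighbor, and keep the best such pair per $(i,j)$. Each outer iteration costs $O(n\log n)$, so all $2k-1$ iterations cost $O(kn\log n)$ and produce all $\binom{2k}{2}$ edge weights of $G$. A minimum weight perfect matching on $G$, which has only $2k$ vertices, can then be computed by a variant of Edmonds' algorithm in $O((2k)^3)=O(k^3)$ time \cite{lawler76,gabow74}. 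Summing gives the claimed $O(k^3+kn\log n)$ bound.

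The only subtle point is the correctness of the graph reduction, specifically the observation that because each color appears in at most one matched edge, the representative point chosen by the bichromatic closest pair for color pair $(i,j)$ never needs to be reconciled with another choice for color $i$; without Lemma~1 the problem would not collapse cleanly onto a $2k$-vertex graph and we would be forced to solve a much larger assignment problem on the full point set. The running-time accounting, once Lemma~1 is in hand, is routine given standard Voronoi diagram and weighted matching subroutines.
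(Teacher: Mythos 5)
Your proposal is correct and follows essentially the same route as the paper: compute all bichromatic closest pairs in $O(kn\log n)$ time via Voronoi diagrams with point location, reduce to a minimum weight perfect matching on the $2k$-vertex color graph, and solve it with Edmonds' algorithm in $O(k^3)$ time. Your explicit justification of the weight-preserving correspondence via Lemma~1 (and the observation that no conflicts arise because each color is matched once) is a welcome addition that the paper leaves implicit.
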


We next consider the graph version of the MSMCS problem, or,
the $k$-Multicolored Matching problem, which is formally defined as follows.

INSTANCE: An undirected weighted graph $G=(V,E)$ with each vertex colored with one of the $2k$ given colors.

QUESTION: Is there a matching $E' \subseteq E$ including all the $2k$ colors? That is, are there $k$ disjoint edges in $E'$, and all the vertices of the edges in $E'$ have different colors? If such a matching exists, find a minimum weight matching among all such matchings. 

%The following theorem shows that $k$-Multicolored Matching is also FPT.

\begin{theorem}
A $k$-multicolored matching can be computed in $O(n+m+k^{3})$ time where $n=|V|$ and $m=|E|$.
\end{theorem}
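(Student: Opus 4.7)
The plan is to mimic the geometric strategy of Lemma 1 and Theorem 1: reduce the problem to a minimum-weight perfect matching on an auxiliary graph of only $2k$ vertices.

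First I would establish the graph analogue of Lemma 1: in any optimal $k$-multicolored matching $M$, if $(u,v)\in M$ with $color(u)=a$ and $color(v)=b$, then $(u,v)$ must be a minimum-weight edge of $G$ among all edges joining a color-$a$ vertex to a color-$b$ vertex. The exchange argument is direct. Because $M$ uses exactly one vertex of each of the $2k$ colors, $u$ is the only color-$a$ vertex in $M$ and $v$ is the only color-$b$ vertex; if a lighter edge $(u',v')$ existed between these classes, substituting it for $(u,v)$ would still yield a multicolored matching and would be strictly cheaper, contradicting optimality. No conflict with the other edges of $M$ is possible, since no other $M$-edge touches color $a$ or color $b$.

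Using this property, the algorithm runs in three phases. (i) Scan the $m$ edges of $G$ once, maintaining for each unordered pair of colors $\{a,b\}$ the minimum-weight edge seen so far between those color classes; this produces at most $\binom{2k}{2}=O(k^2)$ representative edges in $O(n+m)$ time. (ii) Build an auxiliary graph $H$ with vertex set $\{1,2,\dots,2k\}$ (one vertex per color) whose edges are the representative edges with their weights. (iii) Run the variant of Edmonds' algorithm used in Theorem 1 to compute a minimum-weight perfect matching in $H$ in $O((2k)^3)=O(k^3)$ time. If $H$ admits no perfect matching, report that no $k$-multicolored matching exists; otherwise lift each $H$-edge to the corresponding representative edge in $G$.

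Correctness rests on two observations. Optimality follows from the graph analogue of Lemma 1 together with the fact that any $k$-multicolored matching in $G$ projects to a perfect matching in $H$ by reading off the color pair of each matched edge. Feasibility of the lifted solution requires that the chosen representative edges are pairwise vertex-disjoint, and this holds automatically: two representative edges for disjoint color pairs $\{a,b\}$ and $\{c,d\}$ cannot share a vertex, because any shared endpoint would need to carry two distinct colors. Summing the three phases yields the claimed $O(n+m+k^3)$ bound. The main subtlety to verify carefully is the exchange argument; everything else is bookkeeping, and the fact that each color is used exactly once in a multicolored matching is precisely what makes the local replacement conflict-free.
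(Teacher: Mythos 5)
Your proposal is correct and follows essentially the same route as the paper: both reduce to a minimum-weight perfect matching on an auxiliary graph over the $2k$ colors, keeping for each color pair the lightest connecting edge, and then run Edmonds' algorithm in $O(k^3)$ time. Your write-up is actually somewhat more careful than the paper's, since you explicitly justify the exchange argument and the vertex-disjointness of the lifted edges, which the paper leaves implicit.
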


\begin{proof}
We could simulate the method for Theorem 2 as follows. 
First, construct a weighted graph $G_1=(V_1,E_1)$ where $V_1=\{1,2,\dots,2k\}$ and 
\[
E_1=\{(i,j)~|~\exists (p,q)\in E \text{ with } color(p)=i,color(q)=j\}.
\]
Furthermore, we assign a weight to an edge $(i,j)$ in $E_1$ as the smallest weight of an edge 
$(p,q)$ in $E$ with $color(p)=i$ and $color(q)=j$.
Then the existence of a matching $E'\subseteq E$ is equivalent to the existence of a perfect matching in $G_1$.
The minimum weight matching can be computed using Edmonds algorithm with running time $O(k^{3})$ \cite{lawler76,gabow74}.
The theorem follows since $G_1$ can be constructed in linear time.
\end{proof}

In the next section, we investigate the MaxMin and MinMax Matching Color-Spanning
Sets problems.

\section{MaxMin and MinMax Matching Color-Spanning Sets are in P}

We first study the MaxMin Matching Color-Spanning Set problem, i.e., the
minimum edge length is maximized among all feasible color-spanning matchings.
The first attempt is to try to see whether a property similar to Lemma 1 holds
or not. In Figure 1, we show an example where MaxMin Matching Color-Spanning
Set is not necessarily related to the MinSum (or MaxSum) Color-Spanning Matching. 
In Figure 1, the MinSum Color-Spanning Matching is
$\{(a,c),(b,f)\}$, with a total weight of $2-2\epsilon$.
The MaxSum Color-Spanning Matching is $\{(a,b),(d,e)\}$, which has a total weight
of $1+\sqrt{5}$.
The optimal solution for MaxMin Color-Spanning Matching
 is $\{(a,d),(b,e)\}$, with a solution value of
$\sqrt{2}$ (while the total weight is $2\sqrt{2}$).
Note that $(a,d)$ and $(b,e)$ do not form the closest pairs among
the subsets of respective colors.

For the same point set $\{a,b,c,d,e,f\}$, the color-spanning set $\{a,b,d,e\}$
(which happens to correspond to the point set for MaxMin Color-Spanning
Matching), gives the solution for LCPCS (largest closest pair color-spanning
set). The corresponding closest pair in the set has length 1, while the
solution value for MaxMin Color-Spanning Matching is $\sqrt{2}$. Hence, LCPCS
and MaxMin Color-Spanning Matching are not the same and the claim we made
in the conference version, i.e., LCPCS is FPT \cite{bereg17}, is not correct.

We next show that MaxMin Matching Color-Spanning Set has the following
property.

\begin{lemma}
In an optimal solution of MaxMin Color-Spanning Matching, let $p_i$ and $p_j$
be the minimum matched edge, then $(p_i,p_j)$ must be the farthest pair
between points of $color(p_i)$ and $color(p_j)$.
\end{lemma}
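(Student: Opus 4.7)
My plan is a local exchange argument, essentially mirroring the one used for Lemma~1 but with the inequality reversed. Let $M$ be an optimal matching for MaxMin with value $\ell^\ast$, and let $(p_i,p_j)\in M$ be a matched edge of length $\ell^\ast$. Suppose for contradiction that $(p_i,p_j)$ is \emph{not} the farthest pair between colors $c_i=color(p_i)$ and $c_j=color(p_j)$: there exist $p_i'$ of color $c_i$ and $p_j'$ of color $c_j$ with $d(p_i',p_j')>\ell^\ast$. Since $M$ is a color-spanning set, each color appears on exactly one matched vertex, so $p_i'\neq p_i$ implies $p_i'\notin V(M)$, and similarly for $p_j'$.

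The exchange step is to form $M'$ by replacing the endpoints $p_i$ and $p_j$ of the single edge $(p_i,p_j)$ with $p_i'$ and $p_j'$ respectively, leaving every other edge of $M$ untouched. This is legal because no other edge of $M$ uses a vertex of color $c_i$ or $c_j$, so the swap introduces no conflicts and keeps $M'$ color-spanning. The only length that changes is that of the edge itself, which goes from $\ell^\ast$ to some value strictly greater than $\ell^\ast$; every other edge in $M'$ already had length $\geq \ell^\ast$ by definition of $\ell^\ast$.

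If $(p_i,p_j)$ happens to be the unique minimum-length edge of $M$, this immediately contradicts the optimality of $M$, since every edge of $M'$ has length strictly greater than $\ell^\ast$. The one subtlety, and the only real obstacle, is the possibility of ties: several edges of $M$ may share the minimum length $\ell^\ast$, in which case a single swap improves one of them but does not raise the minimum. I would handle this by sharpening the choice of $M$ at the outset: among all optimal matchings, choose one that minimizes the number of edges of length exactly $\ell^\ast$ (a lexicographic refinement). The same swap now produces an $M'$ with the same optimum value $\ell^\ast$ but with strictly fewer edges of length $\ell^\ast$, contradicting this minimality. Either way we obtain a contradiction, proving the lemma.
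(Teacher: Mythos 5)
Your proof is correct and follows essentially the same one-edge exchange argument as the paper, whose own proof is a single sentence: replace the minimum matched edge by the farthest pair between the same two colors, which is legal because each color occurs on exactly one matched vertex. Your lexicographic refinement to handle ties among minimum-length edges is a detail the paper's proof silently skips (its claim that the new matching's minimum edge length ``is longer'' is only immediate when the minimum edge is unique), so your version is if anything the more careful one.
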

\begin{proof}
Let $d_1(p_i,p_j)$ be the length of the minimum matched edge.
Let $d_2(p_i,p_j)$ be the length of the farthest pair between points of
$color(p_i)$ and $color(p_j)$. Then we could replace 
$d_1(p_i,p_j)$ by $d_2(p_i,p_j)$ to have a new matching whose minimum
matched edge length is longer.
\end{proof}

We next show that MaxMin Matching Color-Spanning Set is polynomially solvable
(hence FPT).
With Lemma 2, we construct a complete graph $G_1$ over $k$ vertices each
corresponding to one of the $k$ colors and between two colors $c_i,c_j$ we have
an edge whose length $w(c_i,c_j)$ is the farthest pair between points of color $c_i$ and $c_j$. 
Similar to Theorem 2, the cost for constructing $G_1$ is $O(kn\log n)$ time.

To solve the problem, we sort all edges in $G_1$. Then for any given edge
$e=(c_i,c_j) \in E(G_1)$, we delete all edges of lengths smaller than $w(e)$
and we delete $c_i,c_j$ as well from $G_1$. Let $G'_1$ be the resulting
graph (containing $2k-2$ colors). Then the problem is to test whether $G'_1$ contains a perfect matching
saturating the remaining $2k-2$ colors. The total cost for this decision
problem is $O(k^{3})$ \cite{lawler76,gabow74}.
We then could use binary search to find the best $e^*$ in
$O(k^{3}\log k)$ time.
The total cost of this algorithm is
$O(k^{3}\log k+kn\log n)$ time.

\begin{figure}[htbp]
\centering
\includegraphics[width=0.33\textwidth]{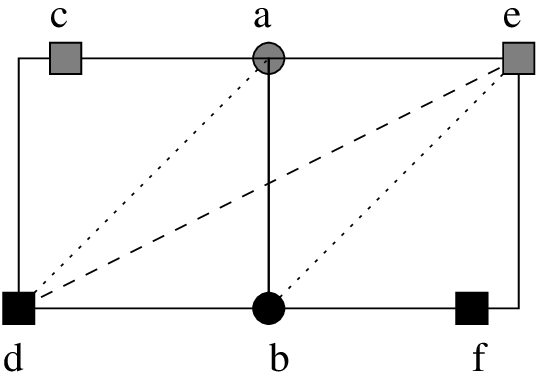}
\label{fig1}
\begin{center}
{\bf Figure 1}. An example of a 4-colored set of 6 points in the plane. 
The edges of both squares have length 1. 
Points $c,f$ are $\epsilon$ distance away from the corresponding closest square corners. 
The MinSum Color-Spanning Matching is $\{(a,c),(b,f)\}$.
The MaxSum Color-Spanning Matching is $\{(a,b),(d,e)\}$.
The MaxMin Color-Spanning Matching is $\{(a,d),(b,e)\}$.
\end{center}
\end{figure}

\begin{theorem}
MaxMin Matching Color-Spanning Set can be solved in 
$O(k^{3}\log k+kn\log n)$ time.
\end{theorem}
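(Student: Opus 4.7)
The plan is to reduce MaxMin Color-Spanning Matching to a bottleneck perfect matching problem on a small auxiliary graph. The essential observation, enabled by Lemma 2, is that once we commit to a pairing of the $2k$ colors into $k$ unordered color-pairs, each color appears in exactly one such pair, so the choice of representative points inside each color-pair is completely independent of the choices made for the other pairs. To maximize the minimum edge length, one should therefore pick, within every used color-pair, the \emph{farthest} bichromatic pair of points. The remaining combinatorial task is only to select the color-pairing.

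To implement this, first build the complete graph $G_1$ on $2k$ vertices, one per color, where the weight of edge $(c_i, c_j)$ equals the bichromatic farthest distance between points of colors $c_i$ and $c_j$. All $\binom{2k}{2}$ weights can be computed in $O(kn\log n)$ time using the same color-by-color sweep as in Theorem 2, substituting farthest-point Voronoi diagrams with $O(\log n)$ farthest-neighbor queries for the nearest-point ones. Then solve the bottleneck perfect matching problem on $G_1$: sort its $O(k^2)$ edge weights and binary search over them. For a candidate threshold $w$, discard every edge of weight less than $w$ and test feasibility of a perfect matching in the remaining graph using Edmonds' algorithm in $O(k^3)$ time. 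The largest $w$ passing the test is the MaxMin optimum, and the corresponding point-matching is recovered by expanding each matched color-pair into its farthest bichromatic pair.

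Summing the costs yields $O(k^3\log k)$ for the $O(\log k)$ matching tests plus $O(kn\log n)$ for preprocessing, matching the claimed bound. The main obstacle is justifying the reduction itself: one must verify that restricting the auxiliary graph's edge weights to per-color-pair farthest distances loses no optimality. This uses Lemma 2 for the bottleneck edge, together with the independence argument above for the non-bottleneck edges — in any matching, since each color participates in exactly one matched pair, substituting the farthest representative within each color-pair can only increase edge lengths and hence can only improve (or preserve) the minimum. A secondary subtlety, flagged by the authors' Figure 1 discussion, is to note that this bottleneck reduction is genuinely different from LCPCS, so one cannot shortcut the argument by appealing to closest-pair structure.
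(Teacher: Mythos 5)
Your proposal is correct and follows essentially the same route as the paper: build the complete graph on the $2k$ colors weighted by bichromatic farthest distances in $O(kn\log n)$ time, then solve the bottleneck perfect matching by sorting the $O(k^2)$ weights and binary searching with an $O(k^3)$ matching test per probe. Your explicit independence argument for the non-bottleneck edges (each color occurs in exactly one matched pair, so every matched pair can be replaced by its farthest representatives without decreasing the minimum) is a worthwhile addition, since the paper's Lemma~2 only addresses the bottleneck edge while the construction implicitly relies on this stronger fact.
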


%\begin{proof}
%We first enumerate all possible ways to pair colors. Among 
%among ${2k\choose 2}$ pairs of colors, we need to select
%$k$ disjoint pairs. There are an $k^{O(k)}$ number of such valid pairs.
%Let $C_i$ and $C_j$ be the set of points with $color_i$ and $color_j$
%respectively. If in the optimal solution for MaxMin Matching Color-Spanning
%Set $color_i$ and $color_j$ are paired together, then the edge connecting two
%points $p\in C_i$ and $q\in C_j$ must be the $\ell$-th closest pair between
%$C_i,C_j$, with $\ell\leq k$. (The reason is that there are only $k$ edges in
%the matching, the other $k-1$ edges in the matching could make at most the first
%$k-1$ closest pairs between $C_i$ and $C_j$ infeasible --- or, are longer than
%these closest pairs.) Then our FPT algorithm for MaxMin Matching Color-Spanning
%set is straightforward:
%given each valid pairing of $2k$ colors, for each pair of colors $color_i$ and
%$color_j$, compute the $\leq k$ closest pairs between the corresponding point
%sets with $color_i$ and $color_j$ respectively, then enumerate $k^k$ possible
%solutions for the matching and return the one whose minimum edge length is
%maximized. For a fixed pairing of $2k$ colors, this takes $O^*(k^k)$ time.
%As we have $k^{O(k)}$ number of such pairings, the total running time is
%still $O^*(k^{O(k)})$.
%\end{proof}

Now we consider the MinMax Matching Color-Spanning Set, namely, the maximum 
edge length is minimized among all feasible color-spanning matchings.
Not surprisingly, such a matching might have nothing to do with the 
the MinSum Color-Spanning Matching or the MaxSum Color-Spanning
Matching. In Figure 2, the MinSum Color-Spanning Matching is
$\{(a,b),(c,d)\}$, with a total weight of 3. The
the MaxSum Color-Spanning Matching has a weight at least that of
$\{(a,c),(b,d)\}$ or $\{(c,d),(e,f)\}$, each having a total weight of $4+2\epsilon$.
For the MinMax Color-Spanning Matching problem, all of the above matchings
give a solution value of $2+\epsilon$. 
The optimal solution is $\{(c,e),(d,f)\}$, with a solution value of
$1.5+\epsilon$ (while the total weight is $3+2\epsilon$).
Also, note that $(c,e)$ and $(d,f)$ do not form the farthest pairs among
the subsets of respective colors.

\begin{figure}[htbp]
\centering
\includegraphics[width=0.28\textwidth]{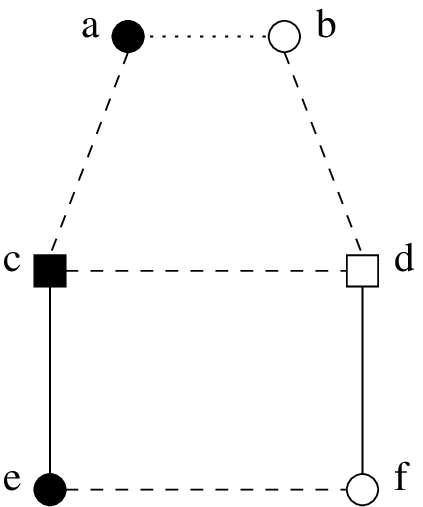}
\label{fig2}
\begin{center}
{\bf Figure 2}. A simple multicolored point set, the dotted, dashed and solid segments have lengths
$1-\epsilon$, $2+\epsilon$ and $1.5+\epsilon$ respectively. The 
MinSum color-spanning matching is $\{(a,b),(c,d)\}$. The
MinMax color-spanning matching is $\{(c,e),(d,f)\}$.
\end{center}
\end{figure}

We next show that MinMax Matching Color-Spanning Set has the following
property.

\begin{lemma}
In an optimal solution of MinMax Color-Spanning Matching, let $p_i$ and $p_j$
be the maximum matched edge, then $(p_i,p_j)$ must be the closest pair
between points of $color(p_i)$ and $color(p_j)$.
\end{lemma}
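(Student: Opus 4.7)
The plan is to mirror the exchange argument used in Lemma 2, but flipped in direction. I would argue by contradiction: suppose some optimal MinMax Color-Spanning Matching $M$ has a maximum-length matched edge $(p_i,p_j)$ that is not the closest pair between points of color $color(p_i)$ and points of color $color(p_j)$. Let $(p_i',p_j')$ denote the actual closest bichromatic pair across these two color classes, so that $d(p_i',p_j') < d(p_i,p_j)$. Since $M$ is color-spanning, no other point of $M$ carries the color $color(p_i)$ or $color(p_j)$, so replacing $p_i$ by $p_i'$ and $p_j$ by $p_j'$ yields another color-spanning matching $M'$ on the same $2k$ colors.

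Next I would bound the MinMax objective of $M'$. Every edge of $M$ other than $(p_i,p_j)$ is retained unchanged in $M'$ and has length at most $d(p_i,p_j)$ by the choice of $(p_i,p_j)$ as the maximum, while the swapped edge $(p_i',p_j')$ is strictly shorter than $d(p_i,p_j)$. Hence the maximum edge length of $M'$ is at most $d(p_i,p_j)$, and it is strictly smaller whenever $(p_i,p_j)$ is the unique maximum-length edge of $M$, which already contradicts the optimality of $M$.

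The only delicate point, and the one I expect to be the main obstacle, is a tie case: if several edges of $M$ realize the maximum length simultaneously, a single swap does not strictly decrease the MinMax value. This is resolved in the standard way, either by iterating the swap on each maximum-length edge in turn (a monovariant: the number of edges attaining the maximum strictly decreases at each step and never increases) or by restating the lemma as the existence of an optimal matching with the stated property, which is all the algorithm needs. Given the lemma, the downstream algorithm can then restrict attention to the ${2k \choose 2}$ bichromatic closest pairs and run a binary search over edge lengths combined with a perfect-matching feasibility test, in the same spirit as the algorithm built after Lemma 2.
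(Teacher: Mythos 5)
Your proof is correct and follows essentially the same exchange argument as the paper, which proves this lemma by declaring it ``symmetric'' to Lemma 2, whose (two-line) proof is exactly the swap of the extremal edge for the appropriate bichromatic pair. Your explicit treatment of the tie case is a refinement the paper omits; as you observe, the weaker ``there exists an optimal matching with this property'' formulation is all the downstream algorithm requires.
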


\begin{proof}
Symmetric to that of Lemma 2, hence omitted.
\end{proof}

We could solve MinMax Color-Spanning Matching in very much the same
way as in Theorem 2, in $O(k^{2.5}\log k+kn\log n)$ time. However, after
a graph $G_2$, over $2k$ colors and the edge weights between two colors
being the closest pair between the corresponding colors, is constructed,
we note that the problem is really the Bottleneck Matching problem on
$G_2$. For a graph with $n_V$ vertices and $n_E$ edges, it is known that such a matching can
be computed in $O(\sqrt{n_V\log n_V}\cdot n_E)$ time \cite{gabow88}. Hence, in our case the MinMax
Color-Spanning Matching can be solved in
$O(k^{2.5}\sqrt{\log k}+kn\log n)$ time.
Therefore we have the following corollary.

%Nonetheless, we show that it is FPT. The algorithm is almost identical
%to that for MaxMin Color-Spanning Matching in Theorem 4. The only difference
%is that when a pairing is fixed for point sets $C_i$ and $C_j$,
%with colors $color_i$ and $color_j$ respectively, we enumerate the
%$\leq k$ farthest pairs between $C_i$ and $C_j$. 

\begin{corollary}
MinMax Color-Spanning Matching can be solved in $O(k^{2.5}\sqrt{\log k}+kn\log n)$ time.
\end{corollary}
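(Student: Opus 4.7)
The plan is to reduce MinMax Color-Spanning Matching to a bottleneck perfect matching problem on a small auxiliary graph whose vertices are the colors, and then apply Gabow's bottleneck matching algorithm to conclude. This mirrors the strategy used for the previous theorem, with the MaxMin-to-minimum-weight-matching replaced by a MinMax-to-bottleneck-matching reduction.

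First I would construct the auxiliary graph $G_2$ on $2k$ vertices, one per color, in which the edge between colors $c_i$ and $c_j$ has weight equal to the bichromatic closest-pair distance between the points of color $c_i$ and the points of color $c_j$. By the Voronoi-based scheme used in Theorem~2 (iterating over each color, building a point-location structure on its points, and querying the points of every later color), all $\binom{2k}{2}$ such weights can be computed in $O(kn\log n)$ time.

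Next I would justify the reduction using Lemma~3. On one side, any color-spanning matching yields a perfect matching in $G_2$ whose bottleneck is no larger, since replacing each matched pair by the bichromatic closest pair of its two colors cannot increase any edge length, and the resulting pairs remain vertex-disjoint (each color is used in exactly one $G_2$-edge, hence a single point of that color is selected). On the other side, a bottleneck-optimal perfect matching in $G_2$ can be realized as a color-spanning matching by picking, for each $G_2$-edge, the actual bichromatic closest pair in the plane, which realizes the edge weight exactly. Lemma~3 guarantees that this realization achieves the MinMax optimum: the bottleneck edge of an optimal color-spanning matching is itself a bichromatic closest pair, so its weight is already captured in $G_2$. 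The two inequalities match, so the optima coincide.

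Finally I would apply Gabow's bottleneck perfect matching algorithm~\cite{gabow88} to $G_2$, which has $n_V=2k$ vertices and $n_E=O(k^2)$ edges; this runs in $O(\sqrt{n_V \log n_V}\cdot n_E)=O(k^{2.5}\sqrt{\log k})$ time. Summing with the $O(kn\log n)$ construction cost yields the claimed bound. The only step that is not purely mechanical is verifying the equivalence between bottleneck matchings in $G_2$ and MinMax color-spanning matchings in the plane; the main thing to be careful about there is the disjointness of the realized point pairs, which however follows immediately from the fact that each color appears in exactly one edge of any perfect matching on $G_2$.
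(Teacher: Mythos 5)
Your proposal is correct and follows essentially the same route as the paper: build the complete graph $G_2$ on the $2k$ colors with bichromatic closest-pair weights (computable in $O(kn\log n)$ via the Voronoi scheme of Theorem~2), observe via Lemma~3 that the problem becomes bottleneck perfect matching on $G_2$, and apply the Gabow--Tarjan algorithm for the $O(k^{2.5}\sqrt{\log k})$ term. Your explicit verification of the two-sided equivalence (and of the disjointness of the realized point pairs) is more detailed than the paper's brief remark, but it is the same argument.
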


In the next section, we show that a special matching problem on graphs is in fact
W[1]-hard.

\section{$k$-Multicolored Independent Matching is W[1]-hard}

The $k$-Multicolored Independent Matching problem is defined as follows.

INSTANCE: An undirected graph $G=(V,E)$ with each vertex colored with one of the $2k$ given colors.

QUESTION: Is there an independent matching $E' \subseteq E$ including all the $k$ colors? That is, are there $k$ edges in $E'$ such that
all the vertices of the edges in $E'$ have different colors, and
 for any two edges $(x_1,x_2)$ and $(y_1,y_2)$ in $E'$,
$(x_i,y_j)\not\in E$ (with $i,j=1..2$).

The problem originates from an application in shortwave radio broadcast, where
the matched nodes represent the shortwave channels which should not directly
affect each other \cite{cp16}. (We also comment that this problem seems to be
related to the uncolored version of `Induced Matching' which is known to be
W[1]-hard as well \cite{MS09,MT09}.) We will show that this problem is not
only NP-complete but also W[1]-hard. The problem to reduce from is
the $k$-Multicolored Independent Set, which is defined as follows.

INSTANCE: An undirected graph $G=(V,E)$ with each vertex colored with one of the $k$ given colors.

QUESTION: Is there an independent set $V' \subseteq V$ including all the $k$ colors? That is, are there $k$ vertices in $V'$ incurring no edge in $E$, and all the vertices in $V'$ have different colors.

When $U\subseteq V$ contains exactly $k$ vertices of different colors, we also
say that $U$ is {\em colorful}.

For completeness, we first prove the following lemma, similar to what was done
by Fellows {\em et al.} on $k$-Multicolored Clique problem \cite{Fellows09}.

\begin{lemma}
$k$-Multicolored Independent Set is W[1]-complete.
\end{lemma}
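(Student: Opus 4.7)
The plan is to establish W[1]-completeness by FPT-reductions in both directions between $k$-Multicolored Independent Set and the classical (uncolored) $k$-Independent Set, which is a standard W[1]-complete problem \cite{DF99,jfmg}. The strategy for hardness mirrors the Fellows \emph{et al.} argument for $k$-Multicolored Clique, but adapted to independent sets, which requires a small additional gadget.

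For membership in W[1], I would reduce $k$-Multicolored Independent Set to (uncolored) $k$-Independent Set. Given a multicolored graph $G$, form $G^+$ by augmenting $G$ with all edges inside each color class so that each color class becomes a clique. Any independent set of size $k$ in $G^+$ can take at most one vertex per color, and since there are exactly $k$ colors it must take exactly one of each, i.e., it is colorful in $G$; conversely, every multicolored independent set of $G$ is an independent set of size $k$ in $G^+$. The parameter $k$ is preserved, so $k$-Multicolored Independent Set lies in W[1].

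For W[1]-hardness, I would reduce from $k$-Independent Set. Given an instance $(G,k)$ with $G=(V,E)$, build the colored graph $G'=(V',E')$ by taking $k$ disjoint copies $V_1,\dots,V_k$ of $V$ with every vertex of $V_i$ assigned color $i$; write $u_i \in V_i$ for the copy of $u\in V$. Insert two kinds of edges and no others: for each pair $i\neq j$ and each $(u,v)\in E$, insert the crossing edge $(u_i,v_j)$; and for each pair $i\neq j$ and each $u\in V$, insert the consistency edge $(u_i,u_j)$. The construction is clearly polynomial in $|V|$ and $k$, and the parameter is unchanged.

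The correctness check is symmetric and short. If $\{x_1,\dots,x_k\}$ is an independent set in $G$ then the $x_i$'s are distinct and $\{(x_1)_1,\dots,(x_k)_k\}$ is colorful and uses no edge of $G'$: consistency edges are ruled out because $x_i\neq x_j$, and crossing edges because $(x_i,x_j)\notin E$. Conversely, given a multicolored independent set $\{(y_1)_1,\dots,(y_k)_k\}$ of $G'$, the absence of consistency edges forces $y_i\neq y_j$ for $i\neq j$, and the absence of crossing edges forces $(y_i,y_j)\notin E$, so $\{y_1,\dots,y_k\}$ is an independent set of $G$ of size $k$. The only delicate point, and the step I would single out as easy-to-miss rather than genuinely hard, is the inclusion of the consistency edges $(u_i,u_j)$; omitting them would allow a multicolored independent set in $G'$ to reuse a single vertex of $G$ across several color classes, breaking the backward direction.
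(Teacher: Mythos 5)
Your proof is correct and your hardness reduction is essentially the one the paper uses: $k$ color-coded copies of the vertex set, with ``crossing'' edges $(u_i,v_j)$ for each original edge and ``consistency'' edges $(u_i,u_j)$ tying the copies of a single vertex together. Two differences are worth noting, both in your favor. First, the paper's proof only establishes W[1]-\emph{hardness} (a reduction from $k$-Independent Set) and then asserts completeness; you also supply the membership direction by collapsing each color class into a clique and reducing to uncolored $k$-Independent Set, which is exactly the missing half of a completeness claim. Second, the paper adds the consistency edges $(u_i,u_j)$ and $(v_i,v_j)$ only for pairs $u,v$ that are endpoints of some edge of $G$, so an isolated vertex $w$ of $G$ would have no edges among its $k$ copies and $\{w_1,\dots,w_k\}$ would be a spurious multicolored independent set; your choice to insert $(u_i,u_j)$ for \emph{every} $u\in V$ closes that corner case without having to argue that isolated vertices can be preprocessed away. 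The point you flag as easy to miss --- that omitting the consistency edges breaks the backward direction --- is indeed the crux, and your construction handles it cleanly.
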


\begin{proof}
The proof can be done through a reduction from $k$-Independent Set.
Given an instance $(G=(V,E),k)$ for $k$-Independent Set, we first make $k$
copies of $G$, $G_i$'s, such that the vertices in each $G_i$ are all colored
with color $i$, for $i=1..k$. For any $u\in V$, let $u_i$ be the corresponding
mirror vertex in $G_i$. Then, for each $(u,v)\in E$ and for each pair of $i,j$,
with $1\leq i\neq j\leq k$, we add four edges $(u_i,u_j)$, $(v_i,v_j)$,
$(u_i,v_j)$ and $(u_j,v_i)$. Let
the resulting graph be $G'$. It is easy to verify that $G$ has a $k$-independent
set if and only if $G'$ has a $k$-multicolored independent set. As $k$-Independent Set is
W[1]-complete \cite{DF99}, the lemma follows.
\end{proof}

The following theorem shows that $k$-Multicolored Independent Matching is not only
NP-complete but also W[1]-hard.

\begin{theorem}
$k$-Multicolored Independent Matching is W[1]-hard, i.e., it does not admit any FPT
algorithm unless FPT=W[1].
\end{theorem}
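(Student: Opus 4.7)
The plan is to give an FPT (in fact polynomial-time and parameter-preserving) reduction from $k$-Multicolored Independent Set, whose W[1]-hardness has just been established in Lemma~3, to $k$-Multicolored Independent Matching. Given an instance $(G=(V,E),k)$ with colors $1,\ldots,k$, I would build a graph $G'=(V',E')$ on $2k$ colors $\{i_1,i_2 : 1\le i\le k\}$ as follows: for every $v\in V$ with $color(v)=i$, introduce two copies $v^1$ (colored $i_1$) and $v^2$ (colored $i_2$); for every $v\in V$, add a ``pair edge'' $(v^1,v^2)$ to $E'$; and for every $(u,v)\in E$, add a single ``cross edge'' $(u^1,v^1)$ to $E'$. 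Only type-1 copies carry cross edges, so every type-2 vertex has degree exactly one in $G'$. The parameter is preserved and the construction is linear in $|V|+|E|$.

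For the forward direction, given a colorful independent set $\{v_1,\ldots,v_k\}\subseteq V$ with $color(v_i)=i$, I would take $M'=\{(v_i^1,v_i^2):1\le i\le k\}$. This matching has $2k$ vertices covering all $2k$ colors; induced-ness follows because any edge of $G'$ joining distinct pairs must be a cross edge $(v_i^1,v_j^1)$, which exists only when $(v_i,v_j)\in E$, contradicting independence of the original set.

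The key (and more delicate) step is the reverse direction. The central observation is that each pair edge contributes one vertex of color $i_1$ and one of color $i_2$ for the same $i$, while each cross edge contributes two vertices whose colors both lie in the ``type-1'' family $\{1_1,\ldots,k_1\}$. Since a colorful independent matching $M'$ of size $k$ in $G'$ must realize all $k$ type-2 colors, and only pair edges produce type-2 colors, every edge of $M'$ must be a pair edge. Writing $M'=\{(v_i^1,v_i^2):1\le i\le k\}$, the induced-matching condition forbids the cross edge $(v_i^1,v_j^1)$ for $i\ne j$, so $(v_i,v_j)\notin E$ and $\{v_1,\ldots,v_k\}$ is a colorful independent set in $G$, completing the equivalence.

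The main obstacle I expect is engineering the asymmetric cross-edge construction correctly: a naive variant that adds all four cross edges $(u^a,v^b)$ for each $(u,v)\in E$ would leave room for alternative matchings mixing pair and cross edges, potentially breaking the reverse direction. Restricting to the single cross edge $(u^1,v^1)$ is precisely what forces every type-2 color to be reachable only via a pair edge, and thus collapses the reverse argument cleanly.
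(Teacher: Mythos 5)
Your reduction is correct, and it proceeds from the same source problem as the paper ($k$-Multicolored Independent Set, whose W[1]-completeness is the paper's Lemma~4, not Lemma~3), but with a genuinely different gadget. The paper adds only $k$ new vertices $u_1,\dots,u_k$, where $u_i$ carries the new color $k+i$ and is joined to \emph{every} vertex of color $i$ in $V$; since $u_i$ is the unique vertex of its color, each matching edge is forced to pair some $u_i$ with a color-$i$ vertex of $V$, and the induced-matching condition then yields independence of the selected vertices exactly as in your argument. Your construction instead gives every vertex a private pendant partner $v^2$ and keeps cross edges only among the type-1 copies; the reverse direction is forced because type-2 colors are reachable only via pendant edges. (In fact your counting argument can be shortened: each $v^2$ has degree one, so the matching edge covering color $i_2$ is immediately a pair edge.) The trade-offs are minor --- the paper's hub gadget adds $k$ vertices and $n$ edges, yours doubles the vertex set; both are linear-time, parameter-preserving reductions. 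Your closing observation about why the four-cross-edge variant would be dangerous is apt; the paper's gadget avoids the analogous pitfall by a different mechanism, namely that a matching edge lying inside $V$ would strand the two hubs $u_{color(v_i)}$ and $u_{color(v_j)}$, which can only be matched into the already-used color classes.
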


\begin{proof}
We reduce $k$-Multicolored Independent Set (IS) to the $k$-Multicolored Independent Matching problem.

Given an instance of $k$-Multicolored IS problem, i.e., a graph $G=(V,E)$ with 
each vertex in $V=\{v_1,v_2,...,v_n\}$ colored with one of the $k$ colors
$\{1,2,...,k\}$, the question is whether one could compute an IS of size $k$,
each with a distinct color.

We construct an instance for the $k$-Multicolored Independent Matching as follows.
First, make a copy of $G$ (with the given coloring of $k$ colors). Then,
construct a set $U=\{u_1,u_2,...,u_k\}$ such that $u_i$ has color $k+i$.
Finally, we connect each $u_i\in U$ to each $v_j\in V$ such that
$color(v_j)=i$, i.e., we construct a
set $E'=\{(u_i,v_j)|u_i\in U, v_j\in V, 1\leq i\leq k, 1\leq j\leq n, i=color(v_j)\}$.
(Note that each $u_i\in U$ is connected to nodes in $V$ of exactly one color.)
Let the resulting graph be $G'=(V\cup U, E\cup E')$, with each vertex in $G'$
colored with one of the $2k$ colors.
We claim that $G$ has a colorful independent set of size $k$ if and only if
$G'$ has a colorful independent matching of size $k$. The details are given
as follows.

If $G$ has a colorful independent set $V'\subseteq V$ of size $k$, we select
the $k$ vertices in $V'$ and match them up with the $k$ vertices in $U$ to
obtain $k$ edges (in $E'$). 
The vertices in $V'$ are independent and no two vertices in $U$ share an edge
(i.e., vertices in $U$ are also independent); moreover,
by the definition of $E'$, the vertices of these $k$ edges contain color
pairs $\{(1,k+1),...,(i,k+i),...,(k,2k)\}$. Therefore, among these $k$ edges,
no two edges can have their vertices directly connected (by
edges in $E\cup E'$). Hence, these $k$ edges form a colorful independent
matching for $G'$.

If $G'$ has a colorful independent matching of size $k$, then the $k$ edges
must be obtained by matching exactly $k$ vertices of $V$ with the $k$ vertices
in $U$. (Otherwise, if two vertices $v_i$ and $v_j$ in $V$ form an edge in
the optimal colorful matching then we cannot have $k$ edges in the matching.
This is because at least two vertices in $U$, of colors $color(v_i)+k$ and
$color(v_j)+k$, cannot match up with vertices in $V\cup U$ by the definition
of $E'$. Then the colorful matching contains at most $k-1$ edges, a
contradiction.) By the definition of colorful independent matching, among the
$k$ edges, the $k$ corresponding vertices from $V$ cannot share any edge
hence form an independent set for $G$.

As the reduction takes polynomial time, the theorem is proved.
\end{proof}

We have the following corollary.
\begin{corollary}
The optimization version of $k$-Multicolored Independent Matching (called Multicolored
Maximum Independent Matching) does not admit a factor $n^{1-\epsilon}$ polynomial-time
approximation, for some $\epsilon>0$, unless P=NP.
\end{corollary}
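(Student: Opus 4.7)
The plan is a straightforward approximation-preserving reduction from Max Independent Set (Max IS), which by the H{\aa}stad--Zuckerman theorem admits no polynomial-time $n^{1-\epsilon}$-approximation (for any $\epsilon>0$) unless P=NP. The construction is a simplification of the one used in Theorem 3, arranged so that the size of the produced instance is linear in the size of the input.

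Given a Max IS instance $G=(V,E)$ with $V=\{v_1,\ldots,v_n\}$, I assign each $v_i$ its own color $c_i$, add a fresh vertex $u_i$ of color $c_{n+i}$ for every $i$, and include the edges $E'=\{(v_i,u_i):1\le i\le n\}$. The resulting graph $G'=(V\cup U,\,E\cup E')$ has $N=2n$ vertices using $2n$ distinct colors, so we are in the optimization version with $k=n$, and since every color class is a singleton, ``colorful'' is automatic for any vertex-disjoint selection.

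The key claim is $\mathrm{OPT}(G)=\mathrm{OPT}(G')$. The forward direction is immediate: an independent set $S\subseteq V$ yields the matching $M=\{(v_i,u_i):v_i\in S\}$, whose independence in $G'$ holds because any connecting edge between two matching edges would force two vertices of $S$ to be adjacent in $G$ (the $u_i$'s are mutually non-adjacent, and the only $V$-$U$ edges are the matching edges themselves). The main obstacle is the reverse direction, since a colorful independent matching $M$ in $G'$ may contain $a$ edges from $E$ as well as $b$ edges from $E'$, with $|M|=a+b$. My plan is to pick one $V$-endpoint from each $E$-edge of $M$ and combine these with the $V$-endpoints of the $E'$-edges to form $S\subseteq V$ of size $a+b$. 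Independence of $M$ in $G'$---applied in the three cases $E$/$E$, $E$/$E'$, and $E'$/$E'$, and using $E\subseteq E(G')$---rules out any $E$-edge within $S$, so $S$ is an independent set in $G$ of size $|M|$.

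Once the equality $\mathrm{OPT}(G)=\mathrm{OPT}(G')$ is in hand, the inapproximability transfers without loss. A hypothetical polynomial-time $N^{1-\epsilon}$-approximation for Multicolored Maximum Independent Matching, applied to $G'$ with $N=2n$, would give an $O(n^{1-\epsilon})$ and thus $n^{1-\epsilon/2}$ approximation for Max IS on $G$ for sufficiently large $n$, contradicting the H{\aa}stad--Zuckerman theorem unless P=NP. This yields the stated $n^{1-\epsilon}$ inapproximability of Multicolored Maximum Independent Matching (in fact for every $\epsilon>0$, which is stronger than the corollary requires).
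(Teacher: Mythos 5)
Your proof is correct, but it takes a genuinely different route from the paper. The paper's own proof is a one-line appeal to the fact that the reductions in Lemma~4 and Theorem~4 are L-reductions, so the inapproximability of Independent Set transfers through the chain $k$-Independent Set $\rightarrow$ $k$-Multicolored Independent Set $\rightarrow$ $k$-Multicolored Independent Matching. You instead build a single direct, optimum-preserving reduction from Max Independent Set: give every vertex its own color, attach a private pendant partner $u_i$ of a fresh color, and observe that independent sets of size $s$ in $G$ correspond exactly to colorful independent matchings of size $s$ in $G'$ (your case analysis for extracting an independent set from a matching that mixes $E$-edges and $E'$-edges is complete and correct). Your approach buys two things. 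First, it is self-contained and verifiable, whereas the paper's assertion that the two reductions are L-reductions is not checked and is not obviously true for the optimization versions: Lemma~4's construction takes $k$ disjoint copies of $G$, which is tailored to the parameterized decision problem and inflates the instance to $kn$ vertices, so the transferred gap degrades when expressed in terms of the new instance size. Second, your instance has only $N=2n$ vertices, and you correctly account for the constant-factor loss, absorbing it into the exponent to recover the full $n^{1-\epsilon}$ hardness for every $\epsilon>0$. The paper's approach, had the L-reduction claims been verified, would have the advantage of reusing machinery already established for Theorem~4 rather than introducing a new gadget, but your construction is simpler than that machinery anyway.
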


\begin{proof} 
As the reductions in Lemma 4 and Theorem 4 are both L-reductions, the
Multicolored Maximum Independent Matching problem is as hard to approximate as the
Independent Set problem, which does not admit a factor $n^{1-\epsilon}$
polynomial-time approximation, for some $\epsilon>0$, unless P=NP
\cite{Zuc07}.
\end{proof}

\section{Closing Remarks}

Motivated by the open question of Fleischer and Xu, we studied the
FPT tractability of some related matching problems under
the color-spanning model. We show in this paper that most of these
problems are polynomially solvable (hence FPT), except one version on
graphs which can be considered as a generalization of the multicolored
independent set problem. The original question on the FPT tractability of
Minimum Diameter Coloring-Spanning Set (MDCS), is, unfortunately, still open.
Also, contrary to what has been claimed in the conference version
\cite{bereg17}, for the symmetric problem of MDCS, Largest Closest Pair
Color-Spanning Set (LCPCS), its FPT tractability is also open.

\section*{Acknowledgments}

This research is partially supported by NSF of China under project 61628207.
We also thank Ge Cunjing for pointing out some relevant reference, and
an anonymous referee for several useful comments.


\begin{thebibliography}{10}

\bibitem{esa01}
M.~Abellanas, F.~Hurtado, C.~Icking, R. Klein, E. Langetepe, L. Ma, B. Palop and V.~Sacristan.
\newblock Smallest color-spanning objects.
\newblock {\em Proc. ESA'01}, pp.278-289, 2001.

%\bibitem{Clark90}
%B.~Clark, C.~Colbourn and D.~Johnson.
%\newblock Unit disk graphs.
%\newblock {\it Discrete Mathematics}, 86(1-3):165-177, 1990.

\bibitem{bereg17}
S.~Bereg, F.~Ma, W.~Wang, J.~Zhang and B.~Zhu.
\newblock On the Fixed-Parameter Tractability of Some Matching Problems under the Color-Spanning Model.
\newblock {\em Proc. 11th Intl. Frontiers in Algorithmics Workshop (FAW'17)}, pp.~17-31, 2017.

\bibitem{Chen}
Y. Chen, S. Shen, Y. Gu, M. Hui, F. Li, C. Liu, L. Liu, B.C. Ooi, X. Yang, D. Zhang, Y. Zhou.
\newblock{MarcoPolo: A community system for sharing and integrating travel information on maps}.
\newblock {\em Proceedings of the 12th International Conference on
Extending Database Technology (EDBT'09)}, pp.1148--1151, 2009.

\bibitem{DF99}
R.~Downey and M.~Fellows.
\newblock {\em {Parameterized complexity}}.
\newblock Springer New York, 1999.

\bibitem{fan14}
C.~Fan, J.~Luo, W.~Wang, F.~Zhong and B.~Zhu.
\newblock On some proximity problems of colored sets.
\newblock {\it J. of Computer Science and Technology}, 29(5):879-886, 2014.

\bibitem{Fellows09}
M.~Fellows, D.~Hermelin, F.~Rosamond, and S.~Vialette.
\newblock On the parameterized complexity of multiple-interval graph problems.
\newblock {\em Theoretical Computer Science}, 410(1):53--61, 2009.

\bibitem{rudolf10}
R.~Fleischer and X.~Xu.
\newblock Computing minimum diameter color-spanning sets.
\newblock {\em Proc. 4th Frontiers of Algorithmics Workshop (FAW'10)}, pp.285-292, 2010.

\bibitem{rudolf11}
R.~Fleischer and X.~Xu.
\newblock Computing minimum diameter color-spanning sets is hard.
\newblock {\em Info. Process. Lett.}, 111(21-22):1054-1056, 2011.

\bibitem{jfmg}
J. Flum and M. Grohe.
\newblock {\em Parameterized Complexity Theory}.
\newblock Springer-Verlag,
\newblock 2006.

\bibitem{gabow74}
H. Gabow.
\newblock  Implementation of Algorithms for Maximum Matching on Nonbipartite Graphs.
\newblock Ph.D. Thesis, Stanford University, 1974.

\bibitem{gabow88}
H. Gabow and R. Tarjan.
\newblock Algorithms for two bottleneck optimization problems.
\newblock {\it J. of Algorithms}, 9(3):411-417, 1988.

\bibitem{ju13}
W. Ju, C.~Fan, J.~Luo, B.~Zhu, and O. Daescu.
\newblock On some geometric problems of color-spanning sets.
\newblock {\it J. of Combinatorial Optimization}, 26(2):266-283, 2013.

%\bibitem{Marx05}
%D.~Marx.
%\newblock Efficient approximation schemes for geometric problems?
%\newblock {\em Proc. ESA'05}, pp.448-459, 2005.

\bibitem{lawler76}
E. Lawler.
\newblock{Combinatorial Optimization: Networks and Matroids}.
\newblock{Holt, Rinehart \& Winston, New York, 1976.}

\bibitem{MS09}
H. Moser and S. Sikdar.
\newblock The parameterized complexity of the induced matching problem.
\newblock {\it Discrete Applied Mathematics}, 157(4):715-727, 2009.

\bibitem{MT09}
H. Moser and D. Thilikos.
\newblock Parameterized complexity of finding regular induced subgraphs.
\newblock {\it J. Discrete Algorithms}, 7(2):181-190, 2009.

\bibitem{cp16}
F.~Ma, X.~Gao, M.~Yin, L.~Pan, J-W.~Jin, H.~Liu and J.~Zhang.
\newblock Optimizing shortwave radio broadcast resource allocation vis pseudo-Boolean constraint solving and local search.
\newblock {\em Proc. 22nd Intl. Conf. on Principles and Practice of Constraint Programming (CP'10)}, pp.650-665, 2016.

\bibitem{PS85}
F.~P. Preparata and M.~I. Shamos,
\newblock {\em Computational geometry: an introduction},
Springer-Verlag, New York, NY, USA, 1985.

\bibitem{Zhang}
D. Zhang, Y. M. Chee, A. Mondal, A.K.H. Tung, and M. Kitsuregawa.
\newblock{Keyword search in spatial databases: Towards searching by document}.
\newblock {\em Proceedings of the 25th IEEE International Conference on Data
Engineering (ICDE'09)}, pp.688--699, 2009.

\bibitem{Zuc07}
D.~Zuckerman.
\newblock {Linear degree extractors and the inapproximability of max clique and chromatic number}.
\newblock {\em Theory of Computing}, 3(1):103--128, 2007.

\end{thebibliography}
\end{document}